\documentclass{LMCS}

\def\doi{9(4:7)2013}
\lmcsheading%
{\doi}
{1--10}
{}
{}
{Mar.~\phantom08, 2013}
{Oct.~16, 2013}
{}


\usepackage{enumerate}
\usepackage{hyperref}

\usepackage{amssymb}
\usepackage{bussproofs}
\usepackage{tikz-qtree}

\theoremstyle{plain}

\begin{document}

\title[Implicit Resolution]{Implicit Resolution}

\author[Z.~C.~Wang]{Zi Chao Wang}	
\address{Department of Algebra\\Faculty of Mathematics and Physics\\Charles University in Prague\\Sokolovsk\' a 83\\Praha 8-Karl\' in\\18675\\Czech Republic}	
\email{wang@karlin.mff.cuni.cz}  
\thanks{The research leading to these results has received funding from the European Community's Seventh Framework Programme FP7/2007-2013 under grant agreement n\ensuremath{^\circ} 238381.}	



\keywords{Implicit Proofs, Resolution, Extended Frege}
\subjclass{MANDATORY list of acm classifications}
\ACMCCS{[{\bf Theory of computiation}]:  Computational complexity and cryptography---Proof complexity}


\begin{abstract}
  Let $\Omega$ be a set of unsatisfiable clauses, an implicit
  resolution refutation of $\Omega$ is a circuit $\beta$ with a
  resolution proof $\alpha$ of the statement ``$\beta$ describes a
  correct tree-like resolution refutation of $\Omega$''. We show that
  such a system is p-equivalent to Extended Frege.

  More generally, let $\tau$ be a tautology, a $[P, Q]$-proof of
  $\tau$ is a pair $(\alpha, \beta)$ s.t. $\alpha$ is a $P$-proof of
  the statement ``$\beta$ is a circuit describing a correct $Q$-proof
  of $\tau$''. We prove that $[EF,P] \leq_p [R,P]$ for an arbitrary
  Cook-Reckhow proof system $P$.
\end{abstract}

\maketitle


\section{Introduction}

In proof complexity one of the basic questions that remained open is whether or not there is an optimal proof system. Although there is no consensus whether such proof system should exist it is generally believed that Extended Frege is the pivotal case in the sense that if such optimal proof system exists then Extended Frege is currently the most natural candidate. This is because Extended Frege corresponds to the complexity class $P/poly$ and many attempts in constructing proof systems that are conjecturally stronger than Extended Frege ended up in producing systems that are equivalent to Extended Frege.

Implicit proofs were introduced by Kraj{\'{\i}}{\v{c}}ek \cite{MR2058178} as a general framework for direct combinatorial constructions of strong proof systems beyond Extended Frege. The idea is to succinctly describe an exponential size proof by some polynomial size circuit then supplement such circuit with an additional correctness proof.
Loosely speaking, let $P$ and $Q$ be some existing proof systems and $\tau$ be a tautology, a $[P, Q]$-proof of $\tau$ is a pair $(\alpha, \beta)$ s.t. $\alpha$ is a $P$-proof of the formalized statement ``$\beta$ is a circuit describing a correct $Q$-proof of $\tau$''. For any proof system $P$ the implicit version of $P$, denoted $iP$, is the proof system defined as $[P,P]$.

Whilst a hierarchy of implicit proof systems based on Extended Frege were introduced in \cite{MR2058178}, the system $iEF$, is of specific interest. We may think of $iEF$ as the ``succinct'' version of exponential size Extended Frege proofs and it bears a correspondence to exponential time computation and serves as the base case of the iterated construction of a strong implicit proof system whose soundness is not provable in the theory $T^2+Exp$ where the exponentiation function is total. We would therefore expect that insights into the problem whether $iEF$ is indeed stronger than Extended Frege, shall contribute to the empirical evidences towards the study of the existence of optimal proof systems.

In contrast to strong proof systems such as Extended Frege of which we do not even have any candidate hard tautologies, resolution is a refutational proof system with the resolution rule as the only derivation rule. It had been extensively studied since its introduction and substantial progress had been made in understanding its limits. Resolution is known to be inefficient for proving a number of combinatorial principles. For example, Haken \cite{MR821207} first proved that the propositional pigeon hole principle requires exponential size resolution refutations. More recently, systematic treatment on lower bounds of resolution in terms of clause width was presented by Ben-Sasson and Widgerson in \cite{MR1868713}.

In this paper we are motivated to understand Extended Frege in terms of resolution and implicit proofs. In Theorem \ref{t_main} we show that Extended Frege is p-equivalent to a resolution based proof system in the framework of implicit proofs\footnote{This was first conjectured by Kraj{\'{\i}}{\v{c}}ek.}. We generalize the construction in Theorem \ref{t_gen} to prove that $[EF,P] \leq_p [R,P]$ for any proof system $P$, hence showing that $iEF$ collapses to $[R,EF]$, although we are not able to address the precise strength of the latter. As a by product, in Lemma \ref{t_c} we show that existence of an $NP$ search algorithm that is provably correct in Extended Frege implies existence of such algorithm provably correct in resolution.

The paper is organized as follows. We briefly review the definition of resolution and fix notation in Section \ref{s_pre}. In Section \ref{s_circuit} we present a prototype of the key technical construction in terms of correctness of $NP$ search algorithms. In Section \ref{s_ires} we give precise definition of implicit resolution and prove the main result that it is p-equivalent to Extended Frege. In Section \ref{s_gen} we outline the construction applied to general implicit proof systems and briefly discuss generalizations to subsystems of $EF$.

\section{Preliminaries}
\label{s_pre}

Recall that a propositional proof system $P$, as defined by Cook and Reckhow in \cite{MR523487}, is a polynomial time Turing machine $P$ s.t. for any propositional formula $\tau$ there exists a string $\pi$ with $P(\pi) = \tau$ if and only if $\tau$ is a tautology, in which case we say that $\pi$ is a $P$-proof of $\tau$. Let $P$ and $Q$ be proof systems. We say that $P$ simulates $Q$, denoted $Q \leq P$, if there exists some constant $c$ s.t. for any tautology $\tau$ with $Q$-proof of length $n$ there exists a $P$-proof of $\tau$ of length less than $n^c$. We also say that $P$ p-simulates $Q$, denoted $Q \leq_p P$, if we are able to compute the required $P$-proof from the given $Q$-proof in time $n^c$. We say that a proof system $P$ is optimal if $Q \leq P$ for any proof system $Q$ \cite{MR1011192}.

A literal is a propositional variable or the negation of a propositional variable, i.e., we say that $l$ is a literal if $l := p$ or $l := \overline{p}$ for some propositional variable $p$. We also write $l^1:=l$ to denote $l$ and $l^0:=\overline{l}$ to denote the negation of l. A clause is a multiset of literals, which we shall interpret as a disjunction. Let $L$ be a clause we would also write $l \in L$ for some literal $l$ if $L := l \lor l_1 \lor ... \lor l_n$ for literals $l_1,...,l_n$. We interpret a set of clauses as a conjunction of disjunctions of literals.

The resolution rule and the weakening rule are the following:
\begin{prooftree}
\AxiomC{$A \lor x$}
\AxiomC{$\overline{x} \lor B$}
\LeftLabel{r}
\BinaryInfC{$A \lor B$}
\DisplayProof\hspace{4cm}
\AxiomC{$C$}
\LeftLabel{w}
\UnaryInfC{$C \lor D$}
\end{prooftree}
where $A$, $B$, $C$ and $D$ are clauses, $x$ is a propositional variable\footnote{The standard definition also requires that $\overline{x} \notin A$ and $x \notin B$.}. We say that the clause $A \lor B$ is derived from $A \lor x$ and $\overline{x} \lor B$ by resolving the variable $x$ and the clause $C \lor D$ is a weakening of the clause $C$, respectively.

Let $\Omega$ be a set of unsatisfiable clauses, a resolution refutation or $R$-refutation of $\Omega$ is a sequence of clauses $C_1,...,C_n$ s.t. $C_n$ is the empty clause and for each $i = 1,...,n$ either $C_i \in \Omega$ or $C_i$ is derived from $C_j$ and $C_k$ with the resolution rule for some $j<k<i$ or $C_i$ is a weakening of $C_j$ for some $j<i$.

We may think of an $R$-refutation as a directed acyclic graph and we say that an $R$-refutation is tree-like, denoted a $R^*$-refutation, if the directed acyclic graph is in fact a tree. We also say that an $R$-refutation is regular if on every path from a source to a sink in the directed acyclic graph no propositional variable is resolved more than once. Tseitin \cite{0205.00402} first introduced regular resolution and proved that regular resolution p-simulates tree-like resolution.

Resolution is complete without the weakening rule, i.e., let $\Omega$ be a set of clauses, if there exists an $R$-refutation of $\Omega$ with weakening then there exists an $R$-refutation of $\Omega$ without weakening.

\section{Resolution Refutation of Circuit Correctness}
\label{s_circuit}

In this paper we work with Extended Resolution, denoted $ER$, which is known to be p-equivalent to Extended Frege. In order to fix notation we give precise definition of Extended Resolution.

\begin{defi}
Let $C$ be a set of clauses we write $\Gamma(C)$ to denote the set of all propositional variables in $C$.
\end{defi}

Recall that an extension clause $e \equiv l_1 \lor l_2$ is an abbreviation for a group of clauses of the form $\overline{e} \lor l_1 \lor l_2, e \lor \overline{l_1}, e \lor \overline{l_2}$ where $e$ is a propositional variable and $l_1$ and $l_2$ are literals. Suppose $p_1$ and $p_2$ are the variables on which the literals $l_1$ and $l_2$ are defined, i.e., $l_1 := p_1^a$ and $l_2 := p_2^b$ for some $a,b \in \{0,1\}$ then we say that the extension clause $e \equiv l_1 \lor l_2$ defines two directed edges $p_1 \longrightarrow e$ and $p_2 \longrightarrow e$.

\begin{defi}
Let $C$ be a set of extension clauses. We say that $C$ is a circuit if the directed graph defined by the extension clauses is acyclic and every vertex has at most two incoming edges. We say that $v \in \Gamma(C)$ is a free variable if $v$ is a source, otherwise we say that $v$ is an extension variable. We write ${\mathcal F}(C)$ and ${\mathcal E}(C)$ to denote the set of free variables and the set of extension variables of the circuit $C$ respectively.
\end{defi}

\begin{defi}
Let $C$ be a set of unsatisfiable clauses, an $ER$-refutation of $C$ is an $R$-refutation of $C \cup D$ for some circuit $D$ with ${\mathcal F}(D) \subseteq \Gamma(C)$ and ${\mathcal E}(D) \cap \Gamma(C) = \emptyset$.
\end{defi}

For convenience we do not restrict the outputs of a circuit to the sinks of the graph and we shall instead specify explicitly the variables for which we would label as outputs. It is easy to see that such relaxed definition is equivalent to the conventional definition. Let $C$ be a circuit and let $v \in \Gamma(C)$ be a vertex which is not a sink be our desired output. We take $C' := C \cup \{ v' \equiv v \lor v \}$ for some $v' \notin \Gamma(C)$ then $v'$ would be a sink in the new circuit $C'$. Similarly suppose we have some sink $v \in C$ yet we do not label $v$ as an output we may take some output $u \in \Gamma(C)$ and define $C' := C \cup \{ v' \equiv v \lor \overline{v}, u' \equiv u \lor \overline{v'} \}$.

We shall write $C[x_1,...,x_m;v_1,..,v_n]$ to denote the circuit $C$ with variables $x_1,...,x_m \in {\mathcal F}(C)$ and $v_1,...,v_n \in {\mathcal E}(C)$ displayed. Note in particular that we do not necessarily display all variables. When we write $C \cup D$ for some circuits $C$ and $D$ we always mean their union as sets of clauses. Therefore the resulting set of clauses $C \cup D$ is not necessarily a circuit since conditions for extension variables might be violated.

\begin{defi}
Let $C$ and $D$ be circuits, we say that $C$ embeds in $D$ and $D$ is an expansion of $C$ if there exists an injection $f: \Gamma(C) \longrightarrow \Gamma(D)$ s.t. $f({\mathcal F}(C)) \subseteq {\mathcal F}(D)$ and $f(e) \equiv f(l_1) \lor f(l_2) \in D$ whenever $e \equiv l_1 \lor l_2 \in C$ where $f(l)=\overline{f(p)}$ if $l=\overline{p}$ for some variable $p$. We say that $f$ is an embedding from $C$ to $D$. We also say that $f$ is an isomorphism if $f$ is bijective.
\end{defi}

\begin{lem}
\label{l_eq}
Let $C$ and $D$ be circuits, if $D$ is an expansion of $C$ and $f: \Gamma(C) \longrightarrow \Gamma(D)$ is an embedding from $C$ to $D$ s.t $f(x)=x$ for all $x \in {\mathcal F}(C)$ then for any $y \in \Gamma(C)$ there exists $R$-refutations of $C \cup D \cup \{ y, \overline{f(y)} \}$ and $C \cup D \cup \{ \overline{y}, f(y) \}$ of size $O(|C|)$.
\end{lem}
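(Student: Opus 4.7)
The plan is to prove the lemma by induction on a topological ordering of the extension variables of $C$, constructing a single \emph{master derivation} $\Pi$ from $C \cup D$ which, for every extension variable $y \in \mathcal{E}(C)$, derives the two clauses $\overline{y} \lor f(y)$ and $y \lor \overline{f(y)}$. I would then obtain each refutation demanded by the lemma by appending only two further resolution steps to $\Pi$.

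The inductive step is where the real work happens. Suppose $y \equiv l_1 \lor l_2 \in C$, so that $C$ contains the clauses $\overline{y} \lor l_1 \lor l_2$, $y \lor \overline{l_1}$, $y \lor \overline{l_2}$, and, by the embedding property, $D$ contains the analogous clauses for $f(y) \equiv f(l_1) \lor f(l_2)$. To derive $\overline{y} \lor f(y)$, I would start from $\overline{y} \lor l_1 \lor l_2$ and, for $i = 1, 2$, replace the literal $l_i$ by $f(l_i)$: if the underlying variable of $l_i$ lies in $\mathcal{F}(C)$, then by hypothesis $f$ is the identity and no step is needed; if it lies in $\mathcal{E}(C)$, then one of the clauses $\overline{p} \lor f(p)$ or $p \lor \overline{f(p)}$ (depending on the sign of $l_i$) has already been derived, and a single resolution step performs the substitution. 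This yields $\overline{y} \lor f(l_1) \lor f(l_2)$, after which two resolutions against $f(y) \lor \overline{f(l_1)}$ and $f(y) \lor \overline{f(l_2)}$ from $D$ produce $\overline{y} \lor f(y)$. The clause $y \lor \overline{f(y)}$ is obtained symmetrically, starting from $\overline{f(y)} \lor f(l_1) \lor f(l_2)$. Each extension variable thus contributes only $O(1)$ new resolution steps, so $|\Pi| = O(|C|)$.

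Finally, for any $y \in \Gamma(C)$, the refutation of $C \cup D \cup \{y, \overline{f(y)}\}$ is the following: if $y \in \mathcal{F}(C)$, then $f(y) = y$ and a single resolution step on $y$ suffices; if $y \in \mathcal{E}(C)$, I append to $\Pi$ two resolutions: $\overline{y} \lor f(y)$ with $y$ to produce $f(y)$, then with $\overline{f(y)}$ to produce the empty clause. The refutation of $C \cup D \cup \{\overline{y}, f(y)\}$ is symmetric.

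The main obstacle to watch for is size blow-up. A na\"\i ve recursive construction, in which each refutation for $y$ recursively invokes refutations for $l_1$ and $l_2$, would double at every level of the circuit and yield exponential size when subformulas are shared. Building a single master derivation that records the equivalences $y \leftrightarrow f(y)$ once per extension variable, in topological order, is precisely what avoids this blow-up and keeps the total size linear in $|C|$.
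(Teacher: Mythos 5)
Your proof is correct, and you have correctly anticipated the one subtle point in this lemma. The paper's proof is also an induction on $C$: for $C' = C \cup \{e \equiv l_1 \lor l_2\}$ it first derives $l_1 \lor l_2$, $\overline{f(l_1)}$ and $\overline{f(l_2)}$ in three resolution steps, and then invokes the induction hypothesis \emph{twice}---once for the variable underlying $l_1$ (after turning the resulting refutation into a derivation of the unit clause $\{l_2\}$) and once for the variable underlying $l_2$. Read literally, this yields the recurrence $T(|C'|) = 2\,T(|C|) + O(1)$, which is exponential; the claimed $O(|C|)$ bound holds only because the two recursive calls share essentially all of their sub-derivations, a fact the paper's write-up leaves implicit. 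Your ``master derivation''---processing the extension variables once each in topological order, recording the pair $\overline{y}\lor f(y)$, $y\lor\overline{f(y)}$ for every $y\in\mathcal{E}(C)$, and finishing each required refutation with $O(1)$ further steps---makes that sharing explicit and yields the $O(|C|)$ dag-structured $R$-refutation directly, which is the cleaner way to justify the bound. One small technicality worth flagging: since the paper treats clauses as multisets, resolving $\overline{y}\lor f(l_1)\lor f(l_2)$ against $f(y)\lor\overline{f(l_1)}$ and then against $f(y)\lor\overline{f(l_2)}$ produces $\overline{y}\lor f(y)\lor f(y)$ rather than $\overline{y}\lor f(y)$; carrying the duplicate along costs at most one extra resolution per variable when the clause is later used, so the asymptotic bound is unaffected, but you should say so.
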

\begin{proof}
By induction on the size of the circuit $C$. The base case is trivial so suppose we have $C' := C \cup \{ e \equiv l_1 \lor l_2 \}$ for some extension variable $e$ and literals $l_1$ and $l_2$.

From $C' \cup D \cup \{e, \overline{f(e)}\}$ we can derive the following clauses:

\begin{tabular}{ c c c }
\begin{tikzpicture}[grow'=up]
            \Tree [.$l_1\lor l_2$ $e$ $e\lor l_1\lor l_2$ ]
\end{tikzpicture}
&
\begin{tikzpicture}[grow'=up]
            \Tree [.$\overline{f(l_1)}$ $\overline{f(e)}$ $f(e)\lor\overline{f(l_1)}$ ]
\end{tikzpicture}
&
\begin{tikzpicture}[grow'=up]
            \Tree [.$\overline{f(l_2)}$ $\overline{f(e)}$ $f(e)\lor\overline{f(l_2)}$ ]
\end{tikzpicture}
\end{tabular}

\noindent By induction hypothesis there is an $R$-derivation of the singleton clause $\{l_2\}$ from $C \cup D \cup \{ l_1 \lor l_2, \overline{f(l_1)} \}$ of size $O(|C|)$. Then again by induction hypothesis there is an $R$-refutation of $C \cup D \cup \{ l_2, \overline{f(l_2)} \}$ of size $O(|C|)$.

The $R$-refutation of $C' \cup D \cup \{\overline{e}, f(e)\}$ is completely analogous.
\end{proof}

\noindent Suppose we have some circuit $C$ and such circuit appears as subcircuit of some larger circuit more than once, possibly with different inputs, as $D$ say, to avoid variable name collision it is mandatory for us to rename all the extension variables in $D$ and sometimes it is also necessary to rename the free variables. Therefore in order to speak about isomorphic subcircuits with potentially different input variables we define a duplicate of a circuit $C$ with variable substitutions $v_1/v'_1,...,v_n/v'_n$ to be an isomorphic copy of the circuit $C$ defined by some isomorphism $f$ mapping the variables $v_1,...,v_n$ to $v'_1,...,v'_n$. This would allow us to be able to substitute the input variables and to specify output variables for different copies of the same circuit with different inputs.

\begin{defi}
Let $C$ be a circuit and let $v_1,...,v_n \in \Gamma(C)$. Suppose there is a circuit $D$ and an isomorphism $f: \Gamma(C) \longrightarrow \Gamma(D)$ s.t. $f(x)=x$ for all $x \in {\mathcal F}(C) \setminus \{ v_1,..., v_n \}$,  $f(v_1)=v'_1,...,f(v_n)=v'_n$ and ${\mathcal E}(C) \cap {\mathcal E}(D) = \emptyset$. Then we say that $D$ is a duplicate of $C$ with $v_1/v'_1,...,v_n/v'_n$ defined by $f$ and we write $D := {\mathcal D}(C;f; v_1/v'_1,...,v_n/v'_n)$.
\end{defi}

The complexity class $TFNP$ was introduced by Megiddo and Papadimitriou in \cite{MR1107721}. This class characterizes the type of $NP$ search problems where a solution is guaranteed to exist. Let $S(X,Y)$ be a binary relation computable in polynomial time. We say that $S \in TFNP$ iff there exists a constant $c$ s.t. for any string $X$ of length $n$ there exists a string $Y$ of length $n^c$ s.t. $S(X,Y)$ holds. We consider correctness proofs of nonuniform algorithms solving the $NP$ search problem defined by the relation $S$ in propositional logic.

Let $S_n[x_1,...,x_n,y_1,...,y_{n^c};\delta]$ be the sequence of uniformly generated circuits computing the relation $S$ where $x_1,...,x_n$ and $y_1,...,y_{n^c}$ are bits of the input strings $X$ and $Y$ respectively and $\delta$ is the output s.t. the relation $S(X,Y)$ holds iff $\delta$ is true. A non-uniform algorithm for $S$ is a sequence of circuits $C_n[x_1,...,x_n;y_1,...,y_{n^c}]$ with $\Gamma(C_n) \cap \Gamma(S_n) = \{x_1,...,x_n,y_1,...,y_{n^c}\}$ where $x_1,...,x_n$ are bits of the input string $X$ and $y_1,...,y_{n^c}$ are bits of the output string $Y$. Define $Correct(C_n,\delta):=C_n \cup S_n [x_1,...,x_n;\delta] \cup \{\overline{\delta}\}$ then the nonuniform algorithm is correct if and only if the set of clauses $Correct(C_n,\delta)$ is not satisfiable for any $n$.

We show that Extended Resolution refutations of such encoded statements could be efficiently translated into resolution refutations. We shall construct from the original circuit $C_n$ a circuit $C'_n$ of size linear in the size of the given $ER$-refutation s.t. the two circuits $C_n$ and $C'_n$ are equivalent in the sense that they compute the same function yet the correctness of the new circuit $C'_n$ has efficient proofs in $R$. This is straightforwardly done by expanding the original circuit $C_n$ with the correctness definition $S_n$ along with all the extension variables as defined in the $ER$-refutation.

\begin{lem}
\label{t_c}
Let $\pi$ be an $ER$-refutation of $Correct(C_n,\delta)$ for some circuit $C_n$ then there exists an $R$-refutation of $Correct(C_n',\delta)$ of size $O(|\pi|)$ for some circuit $C'_n$ of size $O(|\pi|)$.
\end{lem}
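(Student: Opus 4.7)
The plan is to absorb the extension circuit $D$ used to certify $\pi$ as an $ER$-refutation into the underlying circuit itself, while preserving the constraint $\Gamma(C'_n) \cap \Gamma(S_n) = \{x_1,\ldots,x_n,y_1,\ldots,y_{n^c}\}$ that makes $C'_n$ a legitimate non-uniform algorithm. By assumption $\pi$ is an $R$-refutation of $C_n \cup S_n[x;\delta] \cup \{\overline{\delta}\} \cup D$, and the free variables of $D$ may lie anywhere in $\Gamma(C_n \cup S_n)$, in particular among the internal variables of $S_n$ or even be $\delta$ itself. To keep $C'_n$ disjoint from these forbidden variables I would pack into $C'_n$ both a renamed duplicate of $S_n$ and the matching renamed version of $D$, so that the original $ER$-refutation, with its variable names shifted, becomes available as a pure resolution derivation over $C'_n$.

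Concretely, choose a renaming $f\colon \Gamma(S_n) \longrightarrow \Gamma(S'_n)$ that fixes $\mathcal{F}(S_n) = \{x_i,y_j\}$ pointwise and sends each $v \in \mathcal{E}(S_n)$ (including $\delta$, which is renamed to $\delta'$) to a fresh variable disjoint from $\Gamma(C_n \cup S_n) \cup \mathcal{E}(D)$. Let $S'_n := \mathcal{D}(S_n;f;)$ be the corresponding duplicate and let $D'$ be obtained from $D$ by applying $f$ to its free variables lying in $\mathcal{E}(S_n)$, with any auxiliary renaming of $\mathcal{E}(D)$ needed to keep $\mathcal{E}(D') \cap \mathcal{E}(S'_n) = \emptyset$. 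Set $C'_n := C_n \cup S'_n \cup D'$. A direct bookkeeping argument then shows that $C'_n$ is a circuit (acyclic, with in-degree at most two), that $\Gamma(C'_n) \cap \Gamma(S_n) = \{x_i,y_j\}$, and that $|C'_n| = O(|C_n| + |S_n| + |D|) = O(|\pi|)$.

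To refute $Correct(C'_n,\delta) = C'_n \cup S_n[x;\delta] \cup \{\overline{\delta}\}$ I would combine two pieces. First, since $S'_n \subseteq C'_n$ is an expansion of $S_n$ under $f$ with $f$ identity on $\mathcal{F}(S_n)$, Lemma \ref{l_eq} yields an $R$-refutation of $S_n \cup S'_n \cup \{\overline{\delta},\delta'\}$ of size $O(|S_n|)$; the standard trick of carrying the literal $\overline{\delta'}$ through the clauses that descend from the unit axiom $\{\delta'\}$ converts this into an $R$-derivation of the unit clause $\{\overline{\delta'}\}$ from $S_n \cup S'_n \cup \{\overline{\delta}\}$ of the same size. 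Second, applying $f$ clause-by-clause to $\pi$ produces an $R$-refutation $\pi'$ of $C_n \cup S'_n \cup \{\overline{\delta'}\} \cup D' \subseteq C'_n \cup \{\overline{\delta'}\}$ of size $|\pi|$. Concatenating the two pieces gives an $R$-refutation of $Correct(C'_n,\delta)$ of total size $O(|\pi|)$.

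The only real technical point is setting up the renaming so that $C'_n$ simultaneously satisfies the circuit axioms, restricts $\Gamma(C'_n) \cap \Gamma(S_n)$ to $\{x_i,y_j\}$, and positions $S'_n$ as an expansion of $S_n$ under an $f$ that fixes $\mathcal{F}(S_n)$ pointwise; once this is in place, the invocation of Lemma \ref{l_eq}, the clause-by-clause translation of $\pi$ under $f$, and the refutation-to-derivation conversion for the unit clause $\{\delta'\}$ are all routine.
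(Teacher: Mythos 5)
Your proof is correct and follows the same basic route as the paper's: pack the $ER$-refutation's extension circuit $D$ and a renamed copy of $S_n$ into $C'_n$, translate $\pi$ by the renaming to get a refutation over $C'_n$, and bridge the original $\delta$ to the renamed $\delta'$ via Lemma~\ref{l_eq}.

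There is one genuine (though minor) difference worth noting. The paper takes $C'_n := C_n \cup \mathcal{D}(C_n \cup S_n \cup D; f; x_1/x_1,\ldots,x_n/x_n,\delta/\delta')$, so $f$ renames the output bits $y_j$ (being extension variables of $C_n$) and consequently $C'_n$ must carry a second, renamed copy of $C_n$; the embedding used in Lemma~\ref{l_eq} is then $C_n \cup S_n \hookrightarrow C'_n$, and the argument first derives $\{\delta'\}$ from $C'_n$ (by stripping the translated $\{\overline{\delta'}\}$ axiom) and then applies Lemma~\ref{l_eq}. You instead choose $f$ to fix $y_1,\ldots,y_{n^c}$ as well, so $S'_n$ reads the outputs of the original $C_n$ directly, no second copy of $C_n$ is needed, only the embedding $S_n \hookrightarrow S'_n$ is invoked, and the two pieces are composed in the opposite order (derive $\{\overline{\delta'}\}$ from $S_n \cup S'_n \cup \{\overline{\delta}\}$, then run $f(\pi)$). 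This yields a somewhat leaner $C'_n$ at no cost; both constructions give the same asymptotic bound, satisfy the interface condition $\Gamma(C'_n) \cap \Gamma(S_n) = \{x_i, y_j\}$, and preserve the circuit (DAG, fan-in~$\le 2$) invariant.
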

\begin{proof}
By definition $\pi$ is an $R$-refutation of $C_n \cup S_n [x_1,...,x_n;\delta] \cup \{\overline{\delta}\} \cup D$ for some circuit $D$.

Take $C'_n := C_n[x_1,...,x_n;y_1,...,y_{n^c}] \cup {\mathcal D}(C_n \cup S_n [x_1,...,x_n;\delta] \cup D;f;x_1/x_1,...,x_n/x_n,\delta/\delta')$.

From $\pi$ we can construct an $R$-refutation of $C'_n \cup \{ \overline{\delta'}\}$ of size $O(|\pi|)$. We also know that the set of clauses $C'_n$ is not refutable since it is a circuit. Therefore the singleton clause $\{\delta'\}$ is derivable from $C'_n$ with an $R$-derivation of size $O(|\pi|)$.

Now $f \restriction \Gamma(C_n \cup S_n)$ defines an embedding from $Correct(C_n,\delta) \setminus \{ \overline{\delta} \}$ to $C'_n$. Therefore by Lemma \ref{l_eq} there is an $R$-refutation of $Correct(C_n,\delta) \cup C'_n$ of size $O(|\pi|)$. This completes the proof since $Correct(C_n,\delta) \cup C'_ n \subseteq Correct(C'_n,\delta)$.
\end{proof}

We shall briefly remark that similar results also hold in the uniform setting in bounded arithmetic.

Suppose for some polynomial time Turing machine $M$ the bounded arithmetic theory $V^1$ proves that
\begin{equation*}
V^1 \vdash \forall W \forall X \forall Y \forall Z \forall c (Comp_M(W,X,Y) \land Comp_S(Z,X,Y,c) \to c=1)
\end{equation*}
where $Comp_M(W,X,Y)$ is a formula expressing that the string $W$ encodes the transcript of computation of the Turing machine $M$ on input string $X$ and $M$ outputs the string $Y$ and $Comp_S(Z,X,Y,c)$ is a formula expressing that the string $W$ encodes the transcript of computation of the Turing machine $S$ defining the $NP$ search problem on input strings $X$ and $Y$ and output a boolean value $c$ s.t. $c=1$ if and only if the relation $S(X,Y)$ holds.

Then there exists a polynomial time Turing machine $M'$ solving the same $NP$ search problem whose correctness is provable in the theory $U^b_1$
\begin{equation*}
U^b_1\text{-IND} \vdash \forall W \forall X \forall Y \forall Z \forall c (Comp_M(W,X,Y) \land Comp_S(Z,X,Y,c) \to c=1)
\end{equation*}
where $U^b_1$-IND is a theory axiomatized by basic axioms and induction for bounded universal number quantifiers.

\section{Strength of Implicit Resolution}
\label{s_ires}
Let $n$ be the number of propositional variables and let $\Omega$ be a set of unsatisfiable clauses. We may first assume without loss of generality that every $R^*$-refutation is regular and weakening is only applied to initial clauses. Then by appropriate padding we may represent any $R^*$-refutation of $\Omega$ as a balanced decision tree of depth $n$ and size $2^{n+1}-1$.

Our encoding of circuit representation of $R^*$-refutation is as follows. Let $\beta$ be a circuit with $n+1$ inputs\footnote{To simplify definition we insert an additional input as placeholder although only $n$ inputs are actually required.} and $|n|$ outputs. The root of the balanced decision tree is computed by $\beta(0,...,0,1)$, and for each node $x=x_0,...,x_n$ in the tree, the left child is computed by $\beta(x_1,...,x_n,0)$ and the right child is computed by $\beta(x_1,...,x_n,1)$. The circuit computes the propositional variable on which the decision tree branches: the left child represents the negative literal and the right child represents the positive literal.

Notice that with such encoding, any circuit describes a correct $R^*$-refutation of some set of clauses by definition, hence one only needs to check that all the initial clauses are as prescribed.

Let $p_1,...,p_n$ be propositional variables, $\beta$ be a circuit with $n+1$ inputs and $|n|$ outputs, $L$ be a clause, we say that $L$ is an initial clause of the $R^*$-refutation described by $\beta$ if there exists a string $x_0,...,x_n$ of length $n+1$ s.t.
\begin{enumerate}[(a)]
\item
$x_0=1$
\item
$p_j \in L$ iff there exists $i < n-1$ s.t. $\beta(0,...,0,x_0,...,x_i)=j$ and $x_{i+1}=1$
\item
$\overline{p_j} \in L$ iff there exists $i < n-1$ s.t. $\beta(0,...,0,x_0,...,x_i)=j$ and $x_{i+1}=0$
\end{enumerate}
That is, from a string $1,x_1,...,x_n$ we can compute\footnote{Recall that $p^0 := \overline{p}$ and $p^1:=p$ for any propositional variable $p$.} the following initial clause $L$
\begin{equation*}
L := p^{x_1}_{\beta(0,...,0,1)} \lor p^{x_2}_{\beta(0,...,0,1,x_1)} \lor ... \lor p^{x_{n-1}}_{\beta(0,0,1,x_1,...,x_{n-2})} \lor p^{x_n}_{\beta(0,1,x_1,...,x_{n-1})}
\end{equation*}
We write ${\mathcal I}(\beta)$ to denote the set of all initial clauses of the $R^*$-refutation described by $\beta$.

Let $\Omega$ be a set of clauses in $n$ propositional variables $p_1,...,p_n$. We could encode a propositional variable as a string of length $|n|$. Similarly a literal could be encoded as a string of length $|n|+1$ with an extra bit to denote whether it is negated.

We define a circuit $\Delta(\Omega) [x_1,...,x_n, y_{1,1},...,y_{1,|n|},...,y_{n,1},...,y_{n,|n|};\delta]$ of size $O(n \cdot |n| \cdot |\Omega|)$ expressing that $\delta$ is true iff the clause $l_1 \lor ... \lor l_n$ is a weakening of some clause in $\Omega$ where $\delta$ is a new variable and $l_1,...,l_n$ are literals encoded in variables $x_1,...,x_n$, $y_{1,1},...,y_{1,|n|},...,y_{n,1},...,y_{n,|n|}$, i.e, for each $i=1,...,n$  the literal $l_i$ is encoded in the string $x_i,y_{i,1},...,y_{i,|n|}$ so that the literal $l_i$ is $p_j$ iff $y_{i,1},...,y_{i,|n|}$ codes number $j$ and it is negated iff $x_i = 0$.

To check that the clause $l_1 \lor ... \lor l_n$ is a weakening of some initial clause $L \in \Omega$ it suffices to check that for each literal $l \in L$ there exists $i=1,...,n$ s.t. $l_i=l$. Therefore by enumerating all the clauses in $\Omega$ we are able to check whether $l_1 \lor ... \lor l_n$ is a weakening of some $L \in \Omega$.

Formally $\Delta(\Omega) [x_1,...,x_n, y_{1,1},...,y_{1,|n|},...,y_{n,1},...,y_{n,|n|};\delta]$ is a set of clauses with extension variables of constant depth defined as follows:
\begin{enumerate}[$-$]
\item
$bit(m,j)$ is the $m$-th bit of the binary representation of $j$.
\item
$s_{i,j,k} \equiv x_i^{1-k} \lor \displaystyle{\bigvee_{m=1}^{|n|}y_{i,m}^{1-bit(m,j)}}$ expresses that the literal $l_i$ is not $p_j^k$.
\item
$l_{j,k} \equiv \displaystyle{\bigvee_{i=1}^n \overline{s_{i,j,k}}}$ expresses that the clause $l_1 \lor ... \lor l_n$ contains the literal $p_j^k$.
\item
$w(L) \equiv \displaystyle{\bigvee_{p_j^k \in L}\overline{l_{j,k}}}$ expresses that the clause $l_1 \lor ... \lor l_n$ is not a weakening of $L$.
\item
$\delta \equiv \displaystyle{\bigvee_{L \in \Omega} \overline{w(L)}}$ expresses that the clause $l_1 \lor ... \lor l_n$ is a weakening of some $L \in \Omega$.
\end{enumerate}
Recall that in order to compute some clause $l_1 \lor ... \lor l_n \in {\mathcal I}(\beta)$ from some string $1,z_1,...,z_n$ we need to evaluate the circuit $\beta$ $n$ times to compute all the propositional variables. We therefore define an auxiliary circuit $\Lambda[z_1,...,z_n;u_{1,0},...,u_{1,n},...,u_{n,0},...,u_{n,n}]$ of size $O(n)$ with inputs $z_1,...,z_n$ and outputs $u_{1,0},...,u_{1,n},...,u_{n,0},...,u_{n,n}$ s.t. for each $i$, $u_{i,0},...,u_{i,n}$ defines a string as input to the circuit $\beta$ in order to compute the propositional variable at depth $i$, i.e.

\begin{tabular}{c c c c c c}
$u_{1,0}=0,$ & $u_{1,1}=0,$  & \multicolumn{2}{c}{$...,$} & $u_{1,n-1}=0,$ & $u_{1,n}=1$\\
$u_{2,0}=0,$ & \multicolumn{2}{c}{$...,$} & $u_{2,n-2}=0,$ & $u_{2,n-1}=1,$ & $u_{2,n}=z_1$\\
$\vdots$ & \multicolumn{4}{c}{$\vdots$} & $\vdots$\\
$u_{n,0}=0,$ & $u_{n,1}=1,$ & \multicolumn{2}{c}{$...,$} & $u_{n,n-1}=z_{n-2},$ & $u_{n,n}=z_{n-1}$
\end{tabular}

\noindent Let $\Omega$ be a set of clauses and let $\beta[x_0,...,x_n;y_1,...,y_{|n|}]$ be a circuit with inputs $x_0,...,x_n$ and outputs $y_1,...,y_{|n|}$ we define a set of clauses ${\mathcal C}(\Omega,\beta)$ of size $O(n \cdot (|n| \cdot |\Omega| + |\beta|))$ expressing that there exists $L \in {\mathcal I}(\beta)$ s.t. $L$ is not a weakening of $L'$ for any $L' \in \Omega$.
\[\eqalign{{\mathcal C}(\Omega,\beta):= \Delta(\Omega)&[z_1,...,z_n,w_{1,1},...,w_{1,|n|},...,w_{n,1},...,w_{n,|n|};\delta]\cup \{ \overline{\delta} \} \cr
 &\cup \Lambda[z_1,...,z_n;u_{1,0},...,u_{1,n},...,u_{n,0},...,u_{n,n}] \cup \displaystyle{\bigcup_{i=1}^n \beta_i}
  }
\]
where each $\beta_i$ is a duplicate of $\beta$ computing the propositional variable at depth $i$, i.e.
\[\beta_i := {\mathcal D}(\beta;f_i;x_0/u_{i,0},...,x_n/u_{i,n},y_1/w_{i,1},...,y_{|n|}/w_{i,|n|})\]

\begin{defi}
\label{d_ires}
An implicit resolution refutation is a 4-tuple $(n, \Omega, \alpha, \beta)$ where $n$ is the number of propositional variables, $\Omega$ is the set of initial clauses, $\beta$ is a circuit with $n+1$ inputs, $|n|$ outputs and $\alpha$ is an $R$-refutation of ${\mathcal C}(\Omega,\beta)$.
\end{defi}

\begin{lem}
Implicit resolution is a Cook-Reckhow proof system.
\begin{proof}
The soundness of implicit resolution follows directly from soundness of $R$ and $R^*$.

To see that implicit resolution is complete we show that any unsatisfiable clause has an implicit resolution refutation. Let $\Omega$ be a set of unsatisfiable clauses. By completeness of $R^*$ there exists a $R^*$-refutation of $\Omega$. It is trivial to obtain a circuit $\beta$ generating such refutation by hardwiring the outputs. Now the set of clauses ${\mathcal C}(\Omega, \beta)$ is unsatisfiable since $\beta$ is indeed correct therefore it is refutable in $R$ by completeness of $R$.

In order to show that implicit resolution refutations could be verified in polynomial time we give description of a Turing machine as follows:
\begin{enumerate}[(i)]
\item\label{s_1}
decode the input in order to obtain a 4-tuple $(n, \Omega, \alpha, \beta)$ as in Definition \ref{d_ires}.
\item\label{s_2}
verify that $|\Gamma(\Omega)| \leq n$, $|{\mathcal F}(\beta)| = n + 1$ and the output variables of the circuit $\beta$ are correctly specified.
\item\label{s_3}
compute ${\mathcal C}(\Omega,\beta)$.
\item\label{s_4}
verify that $\alpha$ is an $R$-refutation of ${\mathcal C}(\Omega,\beta)$.
\end{enumerate}

It is clear that steps (\ref{s_1}) (\ref{s_2}) (\ref{s_3}) runs in polynomial time and step (\ref{s_4}) also runs in polynomial time since $R$ is also a Cook-Reckhow proof system.
\end{proof}
\end{lem}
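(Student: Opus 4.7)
The plan is to check the three defining properties of a Cook--Reckhow proof system in turn --- soundness, completeness, and polynomial time verification of proofs --- each time leveraging the corresponding property of $R$ or $R^*$ together with the way ${\mathcal C}(\Omega,\beta)$ was engineered in the preceding discussion. The conceptual work has already been done in constructing ${\mathcal C}(\Omega,\beta)$ so that it is unsatisfiable precisely when every initial clause of the $R^*$-refutation described by $\beta$ is a weakening of some clause of $\Omega$; the present lemma is mostly a matter of reading off the consequences.

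For soundness, suppose $(n,\Omega,\alpha,\beta)$ is an implicit resolution refutation. Soundness of $R$ applied to $\alpha$ gives that ${\mathcal C}(\Omega,\beta)$ is unsatisfiable. Unpacking the construction, ${\mathcal C}(\Omega,\beta)$ is satisfiable iff some setting of $z_1,\ldots,z_n$ produces an initial clause of $\beta$ that fails to be a weakening of any clause of $\Omega$; hence its unsatisfiability forces every $L \in {\mathcal I}(\beta)$ to be a weakening of some clause of $\Omega$. Since the encoding of $R^*$-refutations as circuits guarantees that any $\beta$ of the prescribed signature automatically describes a syntactically correct $R^*$-refutation of ${\mathcal I}(\beta)$, I obtain an $R^*$-refutation of $\Omega$ (with weakening confined to initial clauses), and soundness of $R^*$ then gives unsatisfiability of $\Omega$.

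Completeness is a direct appeal to completeness of the underlying systems: given unsatisfiable $\Omega$, use completeness of $R^*$ together with the padding remarks preceding Definition \ref{d_ires} to obtain a regular balanced depth-$n$ $R^*$-refutation with weakening only on initial clauses, and hardwire its decision tree into a circuit $\beta$; correctness of $\beta$ makes ${\mathcal C}(\Omega,\beta)$ unsatisfiable, so completeness of $R$ supplies the required $\alpha$. Finally, for polynomial time checkability I would exhibit a verifier that parses $(n,\Omega,\alpha,\beta)$, performs the syntactic checks on the signature of $\beta$ and on $\Gamma(\Omega)$, assembles ${\mathcal C}(\Omega,\beta)$ from the uniform gadgets $\Delta(\Omega)$, $\Lambda$ and the duplicates $\beta_i$ defined in the previous section (yielding a set of clauses of size $O(n\cdot(|n|\cdot|\Omega|+|\beta|))$ as already noted), and then invokes the polynomial time $R$-verifier on $\alpha$. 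I do not expect any serious obstacle; the only step with any content is the unpacking of ${\mathcal C}(\Omega,\beta)$ in the soundness argument, and this is precisely what the construction was set up to deliver.
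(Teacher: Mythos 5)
Your proof is correct and follows essentially the same route as the paper: soundness via soundness of $R$ and $R^*$, completeness via completeness of $R^*$ plus hardwiring the refutation into $\beta$, and polynomial-time verification by assembling ${\mathcal C}(\Omega,\beta)$ and running the $R$-verifier. The only difference is that you spell out the unpacking of ${\mathcal C}(\Omega,\beta)$ in the soundness step, which the paper leaves implicit with the phrase ``follows directly.''
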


The following is essentially a special case of Lemma 4.1 from \cite{MR2058178}, we sketch a proof here for completeness of presentation and refer the reader to \cite{MR2039508} for detailed treatment on reflection principle for resolution.

\begin{lem}
\label{l_er}
Let $\Omega$ be a set of clauses in $n$ variables and let $\pi$ be an $ER$-refutation of $\Omega$ then there exists a circuit $\beta$ of size $O(n)$ and $ER$-refutation of the set of clauses ${\mathcal C}(\Omega, \beta)$ of size $O(|\Omega|+|\pi|)$.
\begin{proof}
We construct a circuit $\beta[x_0,...,x_n;y_1,...,y_{|n|}]$ with inputs $x_0,...,x_n$ and outputs $y_1,...,y_{|n|}$. The circuit $\beta$ is canonically constructed so that it describes a decision tree that always branch on variable $p_i$ at depth $i$. Such circuit is trivially described by identifying the largest $n$ s.t. $x_i = 0$ for all $i > n$ then enumerating all the possibilities and hardwiring the outputs $y_1,...,y_{|n|}$.

It is clear that the set of clauses ${\mathcal C}(\Omega, \beta)$ is equivalent to a form of Tarski's truth definition. In other words, suppose that we have ${\mathcal C}(\Omega, \beta) = \Delta(\Omega) [x_1,...,x_n;\delta]$ for some $x_1,...,x_n$ then those $x_1,...,x_n$ are precisely the negations of the truth assignments and the entire set of clauses ${\mathcal C}(\Omega, \beta)$ is a formalized statement expressing that $\Omega$ is satisfied by the truth assignment $p_1 := \overline{x_1},...,p_n := \overline{x_n}$. To see this, notice that by definition the clause computed from the string $1,x_1,...,x_n$ is $p^{x_1}_1 \lor ... \lor p^{x_n}_n$. This clause is a weakening of some clause $L \in \Omega$ iff the truth assignment $p_1 := x_1, ..., p_n := x_n$ satisfies the clause $L$. Suppose on the contrary that $x_1,...,x_n$ satisfies the set of clauses ${\mathcal C}(\Omega, \beta)$ then the clause $p^{x_1}_1 \lor ... \lor p^{x_n}_n$ is not a weakening of any $L \in \Omega$. Hence for any $L \in \Omega$ there exists $i=1,...,n$ s.t. $p^{1-x_i}_i \in L$, that is, the truth assignment $p_1 := \overline{x_1},...,p_n := \overline{x_n}$ satisfies the set of initial clauses $\Omega$.

It is well known that such formalized truth definition could be refuted efficiently in $ER$, provided that the original set of clauses $\Omega$ has efficient $ER$-refutation.
\end{proof}
\end{lem}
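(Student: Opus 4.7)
My plan is to construct a very simple ``canonical'' circuit $\beta$ that describes the balanced decision tree which always branches on $p_i$ at depth $i$, and then to argue that ${\mathcal C}(\Omega,\beta)$ reduces—efficiently in $ER$—to a standard propositional truth definition for $\Omega$, for which efficient $ER$-refutations built from $\pi$ are well known.

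First I would describe $\beta[x_0,\dots,x_n;y_1,\dots,y_{|n|}]$ explicitly. By the encoding of Section \ref{s_ires}, the inputs supplied to $\beta$ at depth $i$ have the form $(0,\dots,0,1,z_1,\dots,z_{i-1})$, where the sentinel $1$ sits at position $n-i+1$. So $\beta$ is essentially a priority encoder: it locates the leftmost $1$ in its input and hardwires as outputs the binary representation of the corresponding depth $i$. The associated bit-conditions can be written as a constant-depth formula in the inputs, yielding a circuit of size polynomial in $n$ as required.

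Next, I would unwind ${\mathcal C}(\Omega,\beta)$ and show it is $ER$-provably equivalent to a much simpler set of clauses. Each duplicate $\beta_i$, with inputs fixed by $\Lambda$ to the string $(0,\dots,0,1,z_1,\dots,z_{i-1})$, provably outputs the binary encoding of $i$; that is, the variables $w_{i,1},\dots,w_{i,|n|}$ can be equated with the constant bits of $i$ via short $R$-derivations — essentially the kind of simplification provided by Lemma \ref{l_eq}, applied to $\beta_i$ together with a trivial sub-circuit that computes the binary digits of the constant $i$. After this collapse, the $\Delta(\Omega)$ part, specialized at $y_{i,m}=\mathrm{bit}(m,i)$, reduces to the assertion that the clause $p_1^{z_1}\lor\dots\lor p_n^{z_n}$ is a weakening of some $L\in\Omega$. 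Hence ${\mathcal C}(\Omega,\beta)$ is equivalent in $ER$ to the Tarski-style formula stating that the assignment $p_i:=\overline{z_i}$ satisfies $\Omega$.

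Finally, I would invoke the standard reflection translation (developed in \cite{MR2039508} and appearing as Lemma 4.1 of \cite{MR2058178}): from the given $ER$-refutation $\pi$ of $\Omega$ one constructs, line by line, an $ER$-derivation of the clause expressing ``$C_j$ is satisfied by $p_i:=\overline{z_i}$'' for every $C_j$ occurring in $\pi$; applied to the final empty clause this yields a contradiction with $\{\overline{\delta}\}$. The total size is $O(|\Omega|+|\pi|)$. The main obstacle I expect is the normalization step: carefully verifying that the duplicate-and-extension-variable machinery packaged inside ${\mathcal C}(\Omega,\beta)$ can be collapsed into the clean truth-definition form by short $R$-derivations, tracking how the auxiliary wiring $\Lambda$ interacts with each $\beta_i$ and how Lemma \ref{l_eq} is invoked without blow-up. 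Once this equivalence is in place, the remainder is a direct application of the well-known reflection construction.
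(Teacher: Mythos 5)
Your proposal takes essentially the same route as the paper: you construct the same canonical circuit $\beta$ branching on $p_i$ at depth $i$, reduce ${\mathcal C}(\Omega,\beta)$ to a Tarski-style truth definition, and then invoke the standard reflection-principle argument to lift the given $ER$-refutation $\pi$. The paper compresses the normalization step (``it is clear that'') where you spell out how the $\beta_i$ copies collapse to constants via Lemma~\ref{l_eq}-style reasoning, but this is the same argument filled in with more detail, not a different approach.
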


\begin{thm}
\label{t_main}
Implicit resolution is p-equivalent to $ER$.
\begin{proof}
We first show that $ER$ p-simulates implicit resolution.

Let $(n, \Omega, \alpha,\beta)$ be an implicit resolution refutation we construct polynomial size $ER$-refutation of $\Omega$.

We begin by defining all the required extension variables from the set of clauses $\Omega$. Let $p_1,...,p_n$ be the propositional variables in $\Omega$. From truth assignments to $p_1,...,p_n$ we compute an initial clause defined by the string $1,z_1,...,z_n$.
\[\Lambda[z_1,...,z_n;u_{1,0},...,u_{1,n},...,u_{n,0},...,u_{n,n}] \cup \displaystyle{\bigcup_{i=1}^n \beta_i} \cup A_i [p_1,...,p_n,w_{i,1},...,w_{i,n};z_i]
\]
where each $\beta_i$ is a duplicate of $\beta$ computing the propositional variable at depth $i$, i.e.
\[\beta_i := {\mathcal D}(\beta;f_i;x_0/u_{i,0},...,x_n/u_{i,n},y_1/w_{i,1},...,y_{|n|}/w_{i,|n|})
\]
and each $A_i [p_1,...,p_n,w_{i,1},...,w_{i,n};z_i]$ is an auxiliary circuit assigning $z_i$ the truth value $\overline{p_j}$ where $j$ is the number encoded by the string $w_{i,1},...,w_{i,n}$ in binary representation.

Now the auxiliary circuits $A_i$ ensure that $z_1,...,z_n$ defines a path that falsifies $p_1,...,p_n$.
Therefore the simulation follows from the reflection principle of $R^*$ which also has polynomial size $ER$-refutation.

In this part of the proof we show that implicit resolution p-simulates $ER$.

Let $\Omega$ be the set of initial clauses. By Lemma \ref{l_er} we may assume that we are given some circuit $\beta[x_0,...,x_n;y_1,...,y_{|n|}]$ with inputs $x_0,...,x_n$ and outputs $y_1,...,y_{|n|}$ and an $ER$-refutation $\alpha$ of the set of clauses ${\mathcal C}(\Omega, \beta)$. It suffices to construct polynomial size circuit $\beta'$ and $R$-refutation $\alpha'$ of ${\mathcal C}(\Omega, \beta')$.

The $ER$-refutation $\alpha$ is in fact an $R$-refutation of ${\mathcal C}(\Omega, \beta) \cup D$ for some circuit $D$. To display the output $\delta$ let us suppose that we have ${\mathcal C}(\Omega, \beta) = \Delta(\Omega) [;\delta] \cup \{ \overline{\delta} \} \cup \Lambda[...] \cup \displaystyle{\bigcup_{i=1}^n \beta_i}$ for some $\delta$. We construct a new circuit $\beta'[x_0,...,x_n;y_1,...,y_{|n|}]$ of size $O(|\alpha|)$. Take $\beta' := \beta \cup {\mathcal D}(D \cup {\mathcal C}(\Omega,\beta)\setminus \{ \overline{\delta} \}; f;\delta/\delta')$ for some isomorphism $f$. Then a $O(|\alpha|)$ size $R$-derivation of the singleton clause $\{ \delta' \}$ from $\beta'$ could be analogously translated from $\alpha$. Now we see that the restriction $f \restriction \Gamma({\mathcal C}(\Omega,\beta) \setminus \{ \overline{\delta} \})$ defines an embedding from ${\mathcal C}(\Omega,\beta) \setminus \{ \overline{\delta} \}$ to $\beta'$. By choosing appropriate variables in ${\mathcal C}(\Omega,\beta')$ it is possible to force ${\mathcal C}(\Omega,\beta) \cup \beta' \subseteq {\mathcal C}(\Omega,\beta')$. Therefore we obtain by Lemma \ref{l_eq} an $R$-refutation of ${\mathcal C}(\Omega,\beta')$ of size $O(|\alpha|)$ as required.
\end{proof}
\end{thm}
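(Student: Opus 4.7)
The plan is to establish p-equivalence by proving the two simulations separately. For the direction that $ER$ p-simulates implicit resolution, I would start from an implicit resolution refutation $(n, \Omega, \alpha, \beta)$ and translate the $R$-refutation $\alpha$ of $\mathcal{C}(\Omega,\beta)$ into a polynomial-size $ER$-refutation of $\Omega$. The idea is a direct formalization of soundness: given a truth assignment to $p_1,\ldots,p_n$ I would introduce extension variables $z_1,\ldots,z_n$ naming the unique path in the decision tree described by $\beta$ that falsifies the assignment. Concretely I would duplicate $\beta$ at each depth $i$ to obtain circuits $\beta_i$ computing the branching variable there, plus small auxiliary circuits $A_i$ linking $z_i$ to the negation of the indicated $p_j$. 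Under this substitution $\alpha$ proves that the leaf clause reached is a weakening of some $L \in \Omega$, contradicting the assumption that the assignment satisfies $\Omega$. This is exactly the reflection/soundness principle for $R^*$, which is known to have polynomial-size $ER$-refutations.

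For the other direction, I would start by invoking Lemma \ref{l_er} to convert the given $ER$-refutation $\pi$ of $\Omega$ into a circuit $\beta$ together with an $ER$-refutation $\alpha$ of $\mathcal{C}(\Omega,\beta)$ of size $O(|\Omega|+|\pi|)$. The obstruction is that the definition of implicit resolution requires an \emph{$R$-refutation} of $\mathcal{C}(\Omega,\beta')$ for some circuit $\beta'$, whereas $\alpha$ is really an $R$-refutation of $\mathcal{C}(\Omega,\beta) \cup D$ for some auxiliary extension-variable circuit $D$ that witnesses its use of extension. The task is therefore to absorb $D$ (and essentially all of $\mathcal{C}(\Omega,\beta)$) into a new target circuit.

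The construction I would use is to set $\beta' := \beta \cup \mathcal{D}(D \cup \mathcal{C}(\Omega,\beta)\setminus\{\overline{\delta}\};\, f;\, \delta/\delta')$ for a fresh isomorphism $f$ renaming $\delta$ to a new variable $\delta'$. Because $\beta'$ literally contains, under $f$, every clause that $\alpha$ uses except $\overline{\delta}$, the translation of $\alpha$ yields an $R$-derivation of the singleton $\{\delta'\}$ from $\beta'$ of size $O(|\alpha|)$. To turn this into a refutation of $\mathcal{C}(\Omega,\beta')$ I would choose the free variables of $\mathcal{C}(\Omega,\beta')$ so that $\mathcal{C}(\Omega,\beta) \cup \beta' \subseteq \mathcal{C}(\Omega,\beta')$, and then observe that the appropriate restriction of $f$ is an embedding from $\mathcal{C}(\Omega,\beta)\setminus\{\overline{\delta}\}$ into $\beta'$. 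Lemma \ref{l_eq} then supplies the equivalence clauses linking $\delta$ and $\delta'$, from which the derived $\{\delta'\}$, the available $\{\overline{\delta}\}$, and a final resolution step produce a polynomial-size $R$-refutation of $\mathcal{C}(\Omega,\beta')$.

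The main obstacle I anticipate is the variable-level bookkeeping in the hard direction: ensuring that $\beta'$ remains a valid circuit (free and extension variables partition cleanly, no cycles are introduced by gluing the duplicate onto $\beta$), that the duplicates $\beta_i$ and the auxiliary $\Lambda$ appearing inside $\mathcal{C}(\Omega,\beta')$ can be aligned with the copy of $D \cup \mathcal{C}(\Omega,\beta)$ placed inside $\beta'$, and that the containment $\mathcal{C}(\Omega,\beta) \cup \beta' \subseteq \mathcal{C}(\Omega,\beta')$ actually holds so that Lemma \ref{l_eq} applies verbatim. Soundness and completeness of implicit resolution (already shown) are free; the two simulation constructions above carry all the substantive content.
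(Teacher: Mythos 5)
Your proposal matches the paper's argument essentially step for step: the easy direction formalizes $R^*$-soundness via the $\Lambda$, $\beta_i$, and $A_i$ circuits and invokes the $ER$-provable reflection principle, and the hard direction first applies Lemma~\ref{l_er}, then folds $D$ together with ${\mathcal C}(\Omega,\beta)\setminus\{\overline{\delta}\}$ into the new circuit $\beta'$ via a duplicate with $\delta/\delta'$, derives $\{\delta'\}$, and closes the loop with Lemma~\ref{l_eq} and the containment ${\mathcal C}(\Omega,\beta)\cup\beta'\subseteq{\mathcal C}(\Omega,\beta')$. The bookkeeping concerns you flag at the end are exactly the ones the paper handles implicitly in the phrase ``by choosing appropriate variables,'' so this is the same proof.
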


\section{General Construction}
\label{s_gen}

Let $P$ be a Cook-Reckhow proof system defined by some Turing machine $P$ and let $\tau$ be a tautology. Let $\beta$ be a boolean circuit with $2m$ inputs. We may think of the outputs of the circuit $\beta$ as a 0-1 matrix of size $n \times n$ for some $n = 2^m$. The intended meaning is that each row of the matrix gives a snapshot of the Turing machine $P$ and the entire matrix describes a valid terminating computation of $P$ that outputs $\tau$ on some exponential size input.

Suppose for the sake of argument that $\beta$ does not describe a valid computation of $P$ that outputs $\tau$. By definition of Turing machine we would be able to identify specific local properties on which $\beta$ fails. Hence the correctness condition of the circuit $\beta$ could be expressed by formalizing the negation of the disjunction of the following statements:
\begin{enumerate}[(i)]
\item
there exists some row $j$ and some column $k$ s.t. the tape at row $j+1$ position $k$ is modified without valid tape head transition.
\item
there exists some row $j$ and some column $k$ s.t. the tape head movement at row $j+1$ from position $k$ does not conform to the transition table.
\item
there exists some row $j$ and some column $k$ s.t. row $j$ defines a terminating state and the output at position $k$ does not match the encoded tautology $\tau$.
\end{enumerate}

That is, given some $j$ and $k$ we are able to check that row $j$ column $k$ is a cell on which the circuit $\beta$ violates the definition of the Turing machine $P$ with output $\tau$, in polynomial time, provided that this is actually the case. In fact we only need to evaluate the circuit $\beta$ constantly many times.

Let ${\mathcal C}_P(\tau, \beta)$ denote the canonically generated set of clauses with limited extension expressing the conditions above. We define a general implicit proof system $[P,Q]$ based on Cook-Reckhow proof systems $P$ and $Q$. The idea is to have a $P$-proof $\alpha$ of the correctness of an exponential size $Q$-proof described by some circuit $\beta$. For convenience we shall have the correctness condition written as sets of unsatisfiable clauses instead of tautologies.

\begin{defi}[Kraj{\'{\i}}{\v{c}}ek \cite{MR2058178}]
We say that $(\alpha,\beta)$ is a $[P,Q]$-proof of the tautology $\tau$ if $\alpha$ is a $P$-refutation of the set of clauses ${\mathcal C}_Q(\tau, \beta)$.
\end{defi}

\begin{thm}
\label{t_gen}
$[EF,P] \leq_p [R,P]$ for arbitrary Cook-Reckhow proof system $P$.
\begin{proof}
It suffices to show that $[ER,P] \leq_p [R,P]$ since $EF \equiv_p ER$ implies $[EF,P] \equiv_p [ER,P]$.

Let $\tau$ be a tautology and let $(\alpha, \beta)$ be a $[ER,P]$-proof of $\tau$. We know that $\alpha$ is in fact an $R$-refutation of ${\mathcal C}_P(\tau,\beta) \cup D$ for some circuit $D$ with ${\mathcal F}(D) \subseteq \Gamma({\mathcal C}_P(\tau,\beta))$. By taking $\beta' := \beta \cup {\mathcal D}(D \cup {\mathcal C}_P(\tau,\beta) \setminus \{ \overline{\delta} \})$ where $\delta$ is the output variable asserting the correctness in ${\mathcal C}_P(\tau,\beta)$ the p-simulation follows by arguments similar to that of the proof of Theorem \ref{t_main}.
\end{proof}
\end{thm}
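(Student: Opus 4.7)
The plan is to mirror the second (harder) direction of the proof of Theorem \ref{t_main}, now with an arbitrary Cook-Reckhow system $P$ playing the role that $R^{*}$ played there. Since $EF \equiv_p ER$, we have $[EF,P] \equiv_p [ER,P]$, so it suffices to show $[ER,P] \leq_p [R,P]$. Given a $[ER,P]$-proof $(\alpha, \beta)$ of $\tau$, the refutation $\alpha$ is an $ER$-refutation of ${\mathcal C}_P(\tau,\beta)$, hence an $R$-refutation of ${\mathcal C}_P(\tau,\beta) \cup D$ for some circuit $D$ with ${\mathcal F}(D) \subseteq \Gamma({\mathcal C}_P(\tau,\beta))$ and ${\mathcal E}(D) \cap \Gamma({\mathcal C}_P(\tau,\beta)) = \emptyset$. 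The goal is to build a circuit $\beta'$ and an $R$-refutation of ${\mathcal C}_P(\tau,\beta')$, both of size polynomial in $|\alpha|$.

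Let $\overline{\delta}$ denote the unit clause of ${\mathcal C}_P(\tau,\beta)$ that asserts the negation of correctness. First I would set $\beta' := \beta \cup {\mathcal D}(D \cup ({\mathcal C}_P(\tau,\beta) \setminus \{\overline{\delta}\}); f; \delta/\delta')$ for a fresh variable $\delta'$ and an isomorphism $f$ fixing the input variables of $\beta$. Thus $\beta'$ agrees with $\beta$ on all its outputs while internally carrying a renamed copy of $D$ together with the correctness circuit for $\beta$, with the correctness indicator exposed as the new output $\delta'$. Second, I would translate $\alpha$ literally under $f$: since $\alpha$ derives the empty clause from ${\mathcal C}_P(\tau,\beta) \cup D$, the translation yields an $R$-derivation of the unit clause $\{\delta'\}$ from $\beta'$ of size $O(|\alpha|)$. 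Finally, the restriction of $f$ to $\Gamma({\mathcal C}_P(\tau,\beta) \setminus \{\overline{\delta}\})$ embeds that circuit into $\beta'$ fixing inputs, so Lemma \ref{l_eq} produces an $R$-refutation of ${\mathcal C}_P(\tau,\beta) \cup \beta' \cup \{\overline{\delta}\}$ of size $O(|\alpha|)$. Combining this with the derived $\{\delta'\}$ yields the sought $R$-refutation of ${\mathcal C}_P(\tau,\beta')$, provided we arrange matters so that ${\mathcal C}_P(\tau,\beta) \cup \beta' \subseteq {\mathcal C}_P(\tau,\beta')$.

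The main obstacle is precisely this last inclusion: one must verify that the canonical encoding ${\mathcal C}_P(\tau,\cdot)$ behaves transparently when the underlying circuit is enlarged by irrelevant extension variables. In the resolution case of Theorem \ref{t_main} this was arranged by choosing variable names carefully, and for general $P$ the same flexibility is available because the local-correctness conditions described at the start of this section query $\beta$ at only a constant number of matrix cells per violation; since $\beta'$ and $\beta$ share outputs, the constant-depth correctness clauses unfold identically up to renaming of their auxiliary extension variables, and by selecting that renaming in ${\mathcal C}_P(\tau,\beta')$ appropriately the inclusion can be forced. Beyond this bookkeeping, no new combinatorial ideas are required: the entire argument reduces to the translation step and a single invocation of Lemma \ref{l_eq}.
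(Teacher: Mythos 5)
Your proposal is correct and follows essentially the same route as the paper: reduce to $[ER,P] \leq_p [R,P]$, absorb the extension circuit $D$ together with the correctness clauses into a new circuit $\beta'$ with exposed output $\delta'$, derive $\{\delta'\}$ by translating $\alpha$, and close via Lemma \ref{l_eq} and the inclusion ${\mathcal C}_P(\tau,\beta) \cup \beta' \subseteq {\mathcal C}_P(\tau,\beta')$. You have in fact spelled out the ``arguments similar to Theorem \ref{t_main}'' that the paper leaves implicit, including the bookkeeping needed for the final inclusion, but this is elaboration rather than a different method.
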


\begin{cor}
$iEF \leq_p [R,EF]$
\end{cor}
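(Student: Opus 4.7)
The plan is to read off the corollary as an immediate application of Theorem \ref{t_gen} with the specific choice $P := EF$. Recall from the introduction that the implicit version $iP$ of a proof system $P$ is defined as $[P,P]$, so $iEF$ is by definition $[EF,EF]$. Since $EF$ is itself a Cook-Reckhow proof system, it is a legal instantiation of the parameter $P$ in the statement of Theorem \ref{t_gen}.

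Hence the one-line plan is: set $P := EF$ in Theorem \ref{t_gen} to obtain
\begin{equation*}
[EF,EF] \leq_p [R,EF],
\end{equation*}
which is exactly the assertion $iEF \leq_p [R,EF]$ after unfolding the definition of $iEF$.

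There is no real obstacle here. The only thing to verify is that Theorem \ref{t_gen} was proved uniformly in $P$, with no restriction beyond $P$ being a Cook-Reckhow proof system; a quick inspection of its proof confirms this, as the circuit ${\mathcal C}_P(\tau,\beta)$ and the expansion argument via Lemma \ref{l_eq} go through verbatim when $P$ is instantiated to $EF$. I would therefore present this as a two-line corollary citing Theorem \ref{t_gen} and the definition of $iEF$.
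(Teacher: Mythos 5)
Your proposal is exactly the intended argument: the corollary is an immediate instantiation of Theorem \ref{t_gen} with $P := EF$, combined with the definition $iEF = [EF,EF]$ recalled in the introduction. The paper leaves the proof implicit precisely because it is this two-line observation.
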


We define a variant of implicit resolution which is p-equivalent to $AC^0$-Frege and similar constructions also apply to subsystems of $EF$ such as Frege and $TC^0$-Frege by substituting the circuit complexity classes $NC^1$ and $TC^0$ in place of $AC^0$.

\begin{defi}
Let $\Omega$ be a set of unsatisfiable clauses in $n$ variables, we say that $(\Omega,n,\alpha,\beta)$ is an implicit $AC^0$-resolution refutation of $\Omega$ if $(\Omega,n,\alpha,\beta)$ is an implicit resolution refutation of $\Omega$ and $\beta$ is an $AC^0$ circuit.
\end{defi}

\begin{thm}
Implicit $AC^0$-resolution is p-equivalent to $AC^0$-Frege.
\begin{proof}
The proof follows directly from the arguments in Theorem \ref{t_main} with the circuit class $AC^0$ in place of $P/poly$.
\end{proof}
\end{thm}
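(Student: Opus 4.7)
The plan is to mirror the proof of Theorem \ref{t_main}, verifying at each step that every auxiliary circuit and block of extension clauses produced by the construction remains within $AC^0$, so that the entire argument factors through $AC^0$-Frege in place of $EF$. The underlying correspondence is that $AC^0$-Frege is p-equivalent to resolution augmented with extension axioms of constant depth (the direct analog of $ER \equiv_p EF$); once every circuit occurring in the reduction is verified to be $AC^0$, the chain of simulations carries over with no further modification.

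For the direction that $AC^0$-Frege p-simulates implicit $AC^0$-resolution, I would, given $(n,\Omega,\alpha,\beta)$ with $\beta$ an $AC^0$ circuit, replay the first half of the proof of Theorem \ref{t_main}. That construction uses $n$ duplicates $\beta_1,\ldots,\beta_n$ of $\beta$, the decoding circuit $\Lambda$, and the auxiliary circuits $A_i$ which select $\overline{p_j}$ according to the binary decoding of $w_{i,1},\ldots,w_{i,|n|}$. Each $\beta_i$ is $AC^0$ since $\beta$ is, $\Lambda$ is constant-depth by inspection, and each $A_i$ is a standard $AC^0$ multiplexer. The finishing step invokes the reflection principle for $R^*$; since $R^*$ is a very weak system whose reflection only requires shallow local reasoning, its reflection has polynomial-size $AC^0$-Frege proofs, and the simulation stays within the constant-depth regime.

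For the converse, the first task is an $AC^0$-Frege analog of Lemma \ref{l_er}. The canonical circuit $\beta$ that branches on $p_i$ at depth $i$ is just a priority encoder, hence $AC^0$ of size $O(n)$; the resulting ${\mathcal C}(\Omega,\beta)$ then encodes a formalized Tarski truth definition for $\Omega$, whose refutation in $AC^0$-Frege follows from any given $AC^0$-Frege refutation of $\Omega$. From such a refutation $\pi$, regarded as an $R$-refutation of ${\mathcal C}(\Omega,\beta) \cup D$ for some $AC^0$ extension $D$, the absorption trick from the second half of Theorem \ref{t_main} applies verbatim: set $\beta' := \beta \cup {\mathcal D}(D \cup {\mathcal C}(\Omega,\beta)\setminus \{\overline{\delta}\}; f; \delta/\delta')$, which is again $AC^0$ because both $\beta$ and $D$ are, and then invoke Lemma \ref{l_eq} to finish. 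The main obstacle is the depth bookkeeping: confirming that the multiplexers $A_i$, the canonical priority-encoder of the analog of Lemma \ref{l_er}, and the reflection principle for $R^*$ are all honestly $AC^0$ and admit polynomial-size $AC^0$-Frege proofs. Once these ingredients are in place the combinatorics is identical to the $P/poly$ case, and likewise the same recipe with $NC^1$ or $TC^0$ substituted for $AC^0$ yields the variants mentioned for Frege and $TC^0$-Frege.
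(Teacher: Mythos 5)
Your proposal reproduces the paper's intended strategy---replay Theorem~\ref{t_main} with every auxiliary circuit checked to lie in $AC^0$---and the paper itself offers nothing more than that one sentence, so in that sense you are faithfully tracking the author's argument. However, both the paper's terse proof and your expansion leave a genuine gap in the first direction, and your depth bookkeeping does not actually close it.

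The problem is the iterated composition, not the individual pieces. You correctly observe that each of $\Lambda$, $A_i$, $\beta_i$, and $\Delta(\Omega)$ is $AC^0$, but the path-following construction in the first half of Theorem~\ref{t_main} wires them into a chain of length $n$: the bit $z_i$ is computed by $A_i$ from $w_{i,*}$, which is computed by $\beta_i$ from $u_{i,*}$, which $\Lambda$ assembles from $z_1,\ldots,z_{i-1}$. When the extension variables are eliminated---which is what producing a genuine $AC^0$-Frege refutation of $\Omega$ requires, since bounded-depth Frege has no extension rule---the formula for $z_n$ over $p_1,\ldots,p_n$ has depth $\Theta(n\cdot d(\beta))$, not $O(1)$. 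Checking that ``each $\beta_i$ is $AC^0$'' therefore does not certify that the resulting Frege refutation stays within bounded depth; the composite circuit does not. Your assertion that ``$R^*$'s reflection only requires shallow local reasoning, hence polynomial-size $AC^0$-Frege proofs'' is exactly the claim this iteration threatens: the instance of $R^*$-reflection needed here witnesses, from a truth assignment, the induced root-to-leaf path and the falsified leaf clause, and witnessing that path is the depth-$n$ iteration. Unlike in the $EF$ case, where extension absorbs arbitrary definitional depth, here the depth accumulates in the object language.

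A secondary concern is your opening premise that ``$AC^0$-Frege is p-equivalent to resolution with constant-depth extension.'' Unlike $EF\equiv_p ER$, there is no standard restricted extension rule known to match bounded-depth Frege; if the intended notion is that all extension variables must unfold to constant-depth formulas over the original variables, then the composite circuit above violates precisely that restriction. Without either a different encoding of the decision tree that admits constant-depth path extraction, or a separate argument that the specific $R^*$-reflection instance used here admits short bounded-depth proofs despite the iteration, the transfer of Theorem~\ref{t_main} to the $AC^0$ regime is incomplete as written---both in your proposal and in the paper's proof sketch.
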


\section*{Acknowledgement}
I would like to thank Jan Kraj{\'{\i}}{\v{c}}ek for helpful discussions and suggestions and am extremely grateful to an anonymous referee for pointing out the simplified proof of the first part of Theorem \ref{t_main} without bounded arithmetic.


\bibliographystyle{plain}
\bibliography{wang}

\end{document}